\DeclareSymbolFont{symbols}{OMS}{cmsy}{m}{n}
\newtheorem{thm}{Theorem}[section]
\newtheorem{prop}[thm]{Proposition}
\theoremstyle{definition}
\newtheorem{question}[thm]{Question}
\newtheorem{problem}[thm]{Problem}
\newtheorem{defi}[thm]{Definition}
\theoremstyle{remark}
\newtheorem{rmk}[thm]{Remark}
\newcommand{\Hil}{\mathcal{H}}
\newcommand{\slot}{{~\cdot~}}
\newcommand{\SU}{\mathop{\mathsf{SU}}}
\newcommand{\Mob}{\mathsf{M\ddot ob}}
\newcommand{\Sc}[1][]{\mathbb{S}^{1#1}}
\newcommand{\cF}{\mathcal{F}}
\newcommand{\cB}{\mathcal{B}}
\newcommand{\cD}{\mathcal{D}}
\newcommand{\cC}{\mathcal{C}}
\newcommand{\A}{\mathcal{A}}
\newcommand{\cI}{\mathcal{I}}
\newcommand{\RR}{\mathbb{R}}
\newcommand{\CC}{\mathbb{C}}
\newcommand{\ZZ}{\mathbb{Z}}
\newcommand{\NN}{\mathbb{N}}
\DeclareMathOperator{\Diff}{Diff}
\DeclareMathOperator{\End}{End}
\DeclareMathOperator{\Rep}{Rep}
\DeclareMathOperator{\Hom}{Hom}
\DeclareMathOperator{\id}{id}
\DeclareMathOperator{\B}{B}
\newcommand{\e}{\mathrm{e}}
\newcommand{\ima}{\mathrm{i}}
\newcommand{\op}{\mathrm{op}}
\DeclareRobustCommand{\eg}{e.g.\@\xspace}
\DeclareRobustCommand{\cf}{cf.\@\xspace}
\DeclareRobustCommand{\ie}{i.e.\@\xspace}
\DeclareRobustCommand{\etc}{%
    \@ifnextchar{.}%
        {etc}%
        {etc.\@\xspace}%
}
\DeclareMathOperator{\Irr}{Irr}
\newcommand{\cG}{\mathcal{G}}
\newcommand{\bim}[4][]{{}\prescript{\vphantom{#1}}{#2}{#3}^{#1}_{#4}}
\newcommand{\rev}[1]{{#1}^\mathrm{rev}}
\begin{document}
\date{\today}
\dateposted{\today}
\newcommand{\mytitle}{
The Relation between Subfactors arising from Conformal Nets and the Realization of Quantum Doubles 
}
\title{\mytitle}
\address{Vanderbilt University, Department of Mathematics, 1326 Stevenson
Center, Nashville, TN 37240, USA}
\author{Marcel Bischoff}
\email{marcel.bischoff@vanderbilt.edu}
\thanks{Supported in part by NSF Grant DMS-1362138}
\begin{abstract}
We give a precise definition for when a subfactor arises 
from a conformal net which can be motivated 
by classification of defects.
We show that a subfactor $N\subset M$ arises from a conformal net if there is a conformal net 
whose representation category is the quantum double of $N\subset M$.
\end{abstract}

\maketitle

\setcounter{tocdepth}{2}
\tableofcontents

\section{Introduction}
Finite index subfactors $N\subset M$ generalize finite group fixed points and can be seen as 
describing quantum symmetries.
An important invariant of a subfactor $N\subset M$ is its index $[M:N]$. 
It generalizes the index of a subgroup in the sense that for group-subgroup subfactors we have $[M^H:M^G]=[G:H]$.
By Jones' index theorem \cite{Jo1983} the index takes values in:
$$
  [M:N]\in \left\{ 4\cos^2\left(\frac{\pi}{m}\right) : m=3,4,\ldots\right\}\cup [4:\infty]
$$
and all subfactors with index less than four have finite depth. For index greater than 4 there are some 
known exotic subfators with finite depth and the classification has been recently pushed to $5\frac14$ \cite{JoMoSn2014,AfMoPe2015}.
Finite depth subfactors are rather algebraic objects, but since everything is defined on a Hilbert space
this algebraic structure still has important positivity properties. It is an interesting question if and how they arise 
describing symmetries in models of quantum physics.

Using the Haag--Kastler axioms of algebraic quantum field theory \cite{Ha} one can describe quantum field theory directly 
using nets of local von Neumann algebras. Under natural assumptions the local algebras turn out to be factors and are in many cases isomorphic to the hyperfinite type III${}_1$ factor \cite{Co1973,Ha1987}. The theory of Doplicher--Haag--Roberts 
superselection sectors studies the representation theory of Haag--Kaster nets in terms of so-called localized endomorphisms.
Each endomorphism gives a subfactor, but in higher dimensional QFT the index is rather boring and takes values in $\{n^2:n\in \NN\}$, indeed all subfactors come from a representation of a compact group \cite{DoRo1989}.
On the other hand, in low-dimensional QFT the superselection theory gets interesting. The superselection sectors have braid group statistics \cite{FrReSc1989}
and the index is in general not a square of an integer. For example all values $\left\{ 4\cos^2\left(\frac{\pi}{m}\right) : m=3,4,\ldots\right\}$ 
of the index can be realized by a loop group model $\A_{\SU(2)_{m-2}}$ of $\SU(2)$ at level $m-2$ \cite{Wa}.
But these subfactors always come with a braiding and there are known subfactors which do not admit such a braiding.
Now there are two ways out. First: The quantum double of a subfactor gives a braided subfactor, namely it gives 
a unitary modular tensor category
and one can try to construct quantum field theories realizing such quantum doubles as DHR category of superselection sectors. But these
does in general not give the original subfactor back (at least not directly). Second:
One can look into higher structures of the quantum field theory for example one can allow boundary conditions and defects.
Here one needs to consider extensions and new subfactors arise which are not braided.
The goal of this note is to show that these two directions are directly related.

The experience shows that it might be enough to consider completely rational M\"obius covariant nets on the circle
which describes chiral conformal field theory (CFT).\footnote{Probably one wants assume diffeomorphism covariance. But one the one hand
for our structural results this is not necessary to assume. On the other hand, we are not aware that there is a known completely rational net 
which is not diffeomorphism covariant.} In this framework we want to make a precise 
statement what is meant by 
 the following question.

\begin{question}[\cf \cite{Jo2014}] Do all subfactors come from quantum field theory?
\end{question}
By a subfactor we mean from now on always a finite index, finite depth subfactor 
which is 
hyperfinite of type III${}_1$.
If we have finite index and finite depth subfactor $N\subset M$ of type  
II${}_1$ we can always pass to hyperfinite III${}_1$.
  $\tilde N:=N\subset A \subset \tilde M :=M\otimes A$ with 
$A$ the hyperfinite type III${}_1$ factor with the same standard invariant.

Kawahigashi, Longo and M\"uger \cite{KaLoMg2001} showed that under the rather natural assumption 
of complete rationality of a conformal net $\A$, its representation 
category $\Rep(\A)$ is a \textbf{unitary modular tensor category (UMTC)}.
Unitary modular tensor categories play a prominent role in topological quantum computing,
they give 3 manifold invariants and topological quantum field theories
 via the Reshetekin--Turaev construction
\cite{ReTu1991,Tu1994}.
The natural question arises if one can find a general solution to the following problem (\cf \cite{Ka2015-02}).
\begin{problem}
\label{prob:UMTC}
 Given a unitary modular tensor category $\cC$, find a conformal net 
  with $\Rep(\A)$ braided equivalent to $\cC$.
\end{problem}
The modular tensor category encodes topological information about 
the conformal net in terms of a three dimensional topological field theory.
Note that nevertheless the conformal net has more information than just its representation category,
because there infinitely many non-isomorphic conformal nets sharing the same UMTC as representation category. They arise
by tensoring with a holomorphic net (see below).

Nevertheless, the problem does not seem completely hopeless. 
It is similar to inverse scattering problems in quantum field theory.
One idea is 
to use all or some of the data of the UMTC to construct a statistical mechanical model
which in a limit at critical temperature gives a conformal field theory. 
This way is full of technical difficulties and we will not further comment on it.

Since the quantum double $\cD(N\subset M)$ of a finite index finite depth subfactor $N\subset M$
is a UMTC, it seems to be natural to 
consider the following subproblem of Problem \ref{prob:UMTC} (\cf \cite{EvGa2011,Ka2015-02}).
\begin{problem} 
\label{prob:QDnet}
Given a finite index finite depth subfactor $N\subset M$, 
\begin{enumerate}
  \item Find a conformal 
  net $\A$, such that $\Rep(\A) \cong \cD(N\subset M)$.
  \item Find a conformal net $\A$,  such that $\Rep(\A) \supset \cD(N\subset M)$,
  \ie $\cD(N\subset M)$ is equivalent to a full subcategory of $\Rep(\A)$.
  In this case $\Rep(\A)\cong \cD(N\subset M)\boxtimes \cC$, where $\cC$ is a modular tensor category.
\end{enumerate}
\end{problem}

If we have a UMTC $\cC$ we get a UMTC $\rev{\cC}$
by replacing the braiding with the opposite braiding. In general $\rev{\cC}$ is not braided equivalent to $\cC$.
Since the braiding is instrinisically defined and conformal nets have a posititvity of  energy condition, there seems to be no easy
way to get a net realizing the opposite braiding without destroying the positivity of energy condition. Therefore the following question
naturally arises:
\begin{question}[\cite{Lo2012p}] Let $\A$ be a completely rational net. Is there a completely rational net 
  $\tilde \A$ with $\Rep(\tilde\A) \cong \rev{\Rep(\A)}$?
\end{question}
This question can be answered positively, if we solve the following problem (see Proposition \ref{prop:Cofinite}).
\begin{problem}
Given a completely rational net $\A$, find a holomorphic net $\cB$,
such that $\A\subset \cB$ is normal and cofinite.
\end{problem}

Motivated by the study of phase boundaries and topological defects, we say a subfactor $N\subset M$ arises from a conformal net $\A$,
if there are two relatively local extensions $\cB_a,\cB_b\supset \A$,
and a sector $\beta\colon \cB_a(I)\to\cB_b(I)$ related to $\A$, such that $N\subset M\approx \beta(\cB_a(I))\subset \cB_b(I)$,
see Definition \ref{defi:Subfactor}
\begin{prop} 
Let $N\subset M$ be a finite index, finite depth subfactor.
\begin{enumerate}
  \item If there is a conformal net with $\Rep(\A)$ braided equivalent to $\cD(N\subset M)$, 
    then $N\subset M$ arises from $\A$. 
Actually, it is enough that $\Rep(\A)$ contains a full subcategory braided equivalent to $\cD(N\subset M)$.
  \item Conversely, if $N\subset M$ arises from $\A$, then there is 
  a 2D conformal net $\cB_2\supset \A\otimes \A$ with $\Rep(\cB_2)\cong \cD(N\subset M)$.
  \item Further, if $N\subset M$ arises from $\A$, and there is a net $\tilde A$, with $\Rep(\tilde \A)\cong \rev{\Rep(\A)}$
    then there is a conformal net $\cB\supset \A\otimes \tilde \A$ with $\Rep(\cB)$ braided equivalent to $\cD(N\subset M)$.
\end{enumerate}
\end{prop}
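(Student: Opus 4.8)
The plan is to pass to the modular tensor category $\cC:=\Rep(\A)$ via the Longo--Rehren correspondence, read off the subfactor from the categorical data for (1), and synthesise the $2$D nets for (2) and (3) from explicit commutative Q-systems. Recall the dictionary in its ``Q-system calculus'' form: a relatively local extension $\A\subset\cB$ is the same datum as a (not necessarily commutative) Q-system $\Theta$ in $\cC$, and $\cB$ is moreover a local conformal net exactly when $\Theta$ is commutative for one of the two braidings; given two such, $\cB_a\leftrightarrow\Theta_a$ and $\cB_b\leftrightarrow\Theta_b$, the $\A$-related sectors $\beta\colon\cB_a(I)\to\cB_b(I)$ are precisely the $\Theta_a$-$\Theta_b$-bimodules in $\cC$, with composition the relative tensor product; and a $2$D conformal net $\cB_2\supset\A\otimes\A$ is the same as a commutative Q-system $A$ in $\cC\boxtimes\rev\cC\cong Z(\cC)$, with $\Rep(\cB_2)$ the category of local $A$-modules. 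On the subfactor side, writing $\NNs,\s{\M}{\M},\NMs$ for the even and odd parts of $N\subset M$, one has $\cD(N\subset M)=Z(\NNs)\cong Z(\s{\M}{\M})$, and this center carries two canonical Lagrangian algebras $L_N:=I_{\NNs}(\mathbf 1)$ and $L_M:=I_{\s{\M}{\M}}(\mathbf 1)$ --- the full centres of $\mathbf 1$ and of the dual canonical Q-system $\Gamma=\bar\iota\iota$ --- with $L_N\text{-mod}\cong\NNs$, $L_M\text{-mod}\cong\s{\M}{\M}$, and, as a bimodule category, $L_N\text{-}L_M\text{-bimod}\cong\NMs$ (Kong--Runkel; Davydov--Nikshych--Ostrik).

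For (1), assume a braided fully faithful functor $F\colon\cD(N\subset M)\to\cC$ onto a full subcategory; the hypothesis of the first sentence is the case where $F$ is an equivalence. Being braided, $F$ carries commutative Q-systems to commutative Q-systems, so $F(L_N)$ and $F(L_M)$ define local (hence relatively local), completely rational extensions $\cB_a,\cB_b$ of $\A$. Full faithfulness identifies the bimodule category $L_N\text{-}L_M\text{-bimod}$ formed inside $\cD(N\subset M)$ with the one formed inside $\cC$, which by the dictionary is the category of $\A$-related sectors $\cB_a(I)\to\cB_b(I)$. Transporting the distinguished object $\iota\in\NMs$, together with its Q-system data ($\bar\iota\iota=\Gamma$, $\iota\bar\iota=\Gamma'$), through these equivalences yields an $\A$-related sector $\beta\colon\cB_a(I)\to\cB_b(I)$ with $\bar\beta\beta\cong\Gamma$ and $\beta\bar\beta\cong\Gamma'$ as Q-systems; hence $\beta(\cB_a(I))\subset\cB_b(I)$ has the same standard invariant as $N\subset M$, and since both are hyperfinite of type III${}_1$ with finite depth they are isomorphic by Popa's classification. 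Thus $N\subset M$ arises from $\A$ in the sense of Definition~\ref{defi:Subfactor}.

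For (2), Definition~\ref{defi:Subfactor} supplies relatively local extensions $\cB_a\leftrightarrow\Theta_a$ and $\cB_b\leftrightarrow\Theta_b$ of $\A$ and an $\A$-related sector $\beta\leftrightarrow B$ with $\beta(\cB_a(I))\subset\cB_b(I)\approx N\subset M$, so that $\NNs$ is realised as the full fusion subcategory of $\cD_a:=\Theta_a\text{-}\Theta_a\text{-bimod}(\cC)$ generated by $\bar\beta\beta$; since $\cD_a$ is the dual of $\cC$ with respect to the module category $\Theta_a\text{-mod}(\cC)$ it is Morita equivalent to $\cC$, whence $Z(\cD_a)\cong Z(\cC)\cong\cC\boxtimes\rev\cC$. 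The plan is to build from $(\Theta_a,\Theta_b,B)$ a commutative Q-system $A$ in $\cC\boxtimes\rev\cC$ --- the ``coupling matrix'' of the defect $\beta$, schematically $A\cong\bigoplus_{i,j\in\Irr\cC}Z_{ij}\,(i\boxtimes\bar j)$, equivalently the net-theoretic incarnation of the Longo--Rehren (asymptotic) inclusion of $N\subset M$ --- and to check (i) that $A$ is commutative, so it defines a genuine local $2$D net $\cB_2\supset\A\otimes\A$, and (ii) that its category of local modules is $Z(\NNs)=\cD(N\subset M)$; the identification in (ii) runs through $\cD_a\sim_{\mathrm{Mor}}\cC$ and $L_N\text{-mod}\cong\NNs$. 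Part (3) is the same construction carried out inside $\Rep(\A\otimes\tilde\A)\cong\cC\boxtimes\cC$: the hypothesis $\Rep(\tilde\A)\cong\rev{\Rep(\A)}$ lets one place the ``opposite-chirality'' half of the defect data in $\tilde\A$, so the analogous commutative Q-system $A'$ lives in $\cC\boxtimes\cC$ and yields $\cB\supset\A\otimes\tilde\A$ with $\Rep(\cB)$ \emph{braided} equivalent to $\cD(N\subset M)$; the left--right asymmetry of $\A\otimes\tilde\A$ is what pins down the orientation of the braiding, which in (2) is only determined up to reversal.

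I expect the crux to be the $\Rep$-computation in (2) and (3): one must show that the coupling matrix extracted from $\beta$ is realised by a \emph{local} net --- \ie that the associated Q-system $A$ is commutative in $\cC\boxtimes\rev\cC$ (resp.\ $\cC\boxtimes\cC$) --- and that its category of local modules is $\cD(N\subset M)$ with the correct braiding. Concretely this means transporting the B\"ockenhauer--Evans--Kawahigashi analysis of $\alpha$-induction and the Kong--Runkel--Davydov full-centre calculus into the von Neumann-algebraic setting, and matching the module (nimrep) datum carried by $\beta$ with that of the canonical Lagrangian algebra $L_N\in Z(\NNs)$. A minor but pervasive point, used already in (1), is that ``$\approx$'' in Definition~\ref{defi:Subfactor} may be verified at the level of standard invariants, which is legitimate because every subfactor involved is hyperfinite of type III${}_1$ with finite depth.
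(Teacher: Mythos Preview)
Your outline is correct, but the paper's route is shorter in two respects.

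For (1), the paper does not introduce two Lagrangian algebras: it takes a \emph{single} extension $\cB\supset\A$ coming from the dual Longo--Rehren Q-system of $\bim[N\subset M] N\cF N$, so that $\bim B\cC B\supset \bim[N\subset M] N\cF N$, and then builds the overfactor $\tilde B\supset B$ from the Q-system $\bar\iota\iota\in\bim[N\subset M] N\cF N$; this $\tilde\cB$ is only relatively local over $\A$, and the inclusion $\iota(B)\subset\tilde B$ realises $N\subset M$. Your two-Lagrangian approach works as well and has the pleasant feature that both $\cB_a,\cB_b$ are genuinely local, but your claim that full faithfulness of $F$ identifies the \emph{entire} $L_N$--$L_M$ bimodule categories formed in $\cD(N\subset M)$ and in $\cC$ is too strong when $F$ is not essentially surjective (objects of $\cC$ outside the image of $F$ may carry bimodule structures). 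Fortunately you only use that the particular bimodule $\iota$ and the Q-system computations $\bar\iota\otimes_{L_M}\iota$ and $\iota\otimes_{L_N}\bar\iota$ transfer, and that much full faithfulness does guarantee.

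For (2) and (3) the paper sidesteps your anticipated ``crux'' entirely. Rather than constructing an explicit commutative Q-system in $\cC\boxtimes\rev\cC$ and computing its local modules, one observes that $\bim[N\subset M] N\cF N$ sits as a full fusion subcategory of $\bim{B_a}\cC{B_a}$ (because $\beta(B_a)\subset B_b\approx N\subset M$), that $\bim{B_a}\cC{B_a}$ is Morita equivalent to $\cC=\Rep^I(\A)$ (this is precisely what relative locality of $\cB_a$ over $\A$ buys), hence $\Rep(\A_2)\cong Z(\cC)\cong Z(\bim{B_a}\cC{B_a})$, and then the Galois correspondence of the preceding proposition, applied with $\cG=\bim[N\subset M] N\cF N\subset\cF=\bim{B_a}\cC{B_a}$, hands you directly an intermediate local extension $\cB_2$ with $\Rep(\cB_2)\cong Z(\cG)=\cD(N\subset M)$. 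Part (3) is verbatim the same with $\A\otimes\tilde\A$ in place of $\A_2$. So no explicit coupling matrix, no BEK $\alpha$-induction computation, and no commutativity check are needed: once the inclusion of fusion categories is in hand, the existence of $\cB_2$ is pure Galois theory.
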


\section{Subfactors and Unitary Fusion Categories}
Let $M$ be the hyperfinite type III${}_1$ factor
and $N\subset M$ a finite index and finite depth subfactor.
We denote by $\iota\colon N\to M$ the canonical inclusion map,
which is a morphism (normal $\ast$-homomorphism) $N\to M$.
Then finite index of $N\subset M$ is equivalent with the existence of a morphism $\bar\iota\colon M \to N$, such that 
$\id_N\prec \bar\iota\circ \iota$
and $\id_M\prec \iota\circ\bar\iota$ \cf \cite{Lo1990}.
Here we say a morphism $\rho\colon N\to M$ is contained in $\sigma\colon N\to M$, written $\rho\prec\sigma$ if and only if there is an isometry 
$e\in \Hom(\rho,\sigma)=\{t\in M: t\rho(N) =\sigma(N) t\}$.

All endomorphisms $\rho$ of $M$, such that $\rho(M)\subset M$ have finite index,
form a rigid C${}^\ast$-tensor category $\End_0(M)$. The arrows $t\colon \rho\to \sigma$ are given by $t\in \Hom(\rho,\sigma)$ as above
and the tensor product is given by composition of endomorphisms. An endomorphism $\rho$
is called irreducible if $\Hom(\rho,\rho)=\CC \cdot 1$ and since $\Hom(\rho,\rho)=\rho(N)'\cap N$ this is equivalent
with the subfactor $\rho(N)\subset N$ being irreducible.
We denote by $[\rho]$ the sector of $\rho$ which is the unitary equivalence class $\{u\rho(\slot)u^\ast : u\in N \text{ unitary}\}$.
There is a direct sum, well-defined on sectors, given by $\rho\oplus\sigma=v_1\rho(\slot)v_1^\ast+v_2\sigma(\slot)v_2^\ast$ with $v_i$ isometries 
fulfilling the Cuntz algebra relations: $\sum_{i}v_iv_i^\ast =1$
and $v_i^\ast v_j=\delta_{i,j}$.

A finite index subfactor $N\subset M$ with $\iota\colon N \to M$ and conjugate $\bar\iota\colon M\to N$ 
gives two rigid C${}^\ast$-tensor categories,
\begin{itemize}
  \item
the dual even part 
$\bim[N\subset M]  N\cF N=\langle\rho\prec (\bar\iota\circ \iota)^n\rangle \subset \End_0(N)$ and
  \item 
the even part
$\bim[N\subset M] M \cF M =\langle\rho\prec (\iota\circ \bar\iota)^n\rangle \subset \End_0(M)$. 
\end{itemize}
Here the $\langle S \rangle$ denotes the full and replete tensor subcategory
generated by $S$ and closed under taking direct sums.

The subfactor $N\subset M$ is called \textbf{finite depth} if and only if the set $\Irr(\bim[N\subset M]A\cF A)=\{[\rho] :\rho\in\bim[N\subset M]A\cF A \text{ irreducible}\}$
with $A=N,M$ is finite. In this case, $\bim[N\subset M] N\cF N$ and $\bim[N\subset M] M\cF M$ are unitary fusion categories.

The subfactor actually generates a 2-category $\cF_{N\subset M}$ with zero objects $\{N,M\}$, by taking all finite morphisms
$\beta\colon \{N,M\}\to\{N,M\}$ contained in compositions of $\{\iota,\bar\iota\}$, such that 
$\bim[N\subset M] N\cF N$ and $\bim[N\subset M] M \cF M$ sit in the $N-N$ and $M-M$ corner, respectively, of $\cF_{N\subset M}$.

We remark that a unitary fusion category given as a full and replete subcategory $\bim M\cF M\subset \End_0(M)$ is completely fixed by its finite set of sectors. Conversely, given
a finite set of endomorphisms 
$\Delta=\{\rho_0=\id,\rho_1,\ldots,\rho_n\}$
which is closed under
\begin{itemize}
  \item conjugates, \ie
there is a permutation $i\mapsto \bar i$ on $\{1,\ldots, n\}$, such that $[\rho_{\bar i}]=[\bar\rho_i]$
  and
  \item fusion, \ie there are numbers $N^k_{ij}$, the so-called fusion rule coefficients, 
such that $[\rho_i\circ\rho_j]=\bigoplus_{\rho_k\in \Delta} N_{i,j}^{k} [\rho_k]$
\end{itemize}
there is a unique unitary fusion $\langle \rho\in\Delta\rangle \subset \End_0(M)$.

Every fusion category $\bim M\cF M$ can be seen as the even part of a subfactor $N\subset M$.
For example we can simply take the subfactor $N=\rho_\oplus(M)\subset M$,
where $[\rho_\oplus]=\bigoplus_{\rho\in\Irr(\bim M\cF M)}[\rho] $.
This particular subfactor actually has the special feature 
that the depth is two, which implies that there is a weak Kac algebra $Q$, such that 
$N=M^Q\subset M$ \cite{Re1996,NiSzWi1998,NiVa2000}. In this sense, one can see as fusion  categories as representation categories 
of weak Kac algebras, but the choice of $Q$ with this property is not unique.

Further every abstract unitary fusion category $\cF$ can be realized as a (as a concrete fusion category) in $\End_0(M)$,
 \ie there is a full and replete $\bim M \cF M \subset \End_0(M)$, which is equivalent 
to $\cF$. Namely, \cite{HaYa2000}
 gives a realization as bimodules of the hyperfine II${}_1$ factor $R$ and by tensoring with the hyperfinite type III${}_1$
factor we get it realized as endomorphisms (\cf \cite{Lo1990}). Using Popa's theorem \cite{Po1993}
such a realization is unique, \ie if we have another relation on $\tilde M$ then there is an isomorphism $\phi\colon M\to \tilde M$ 
which gives an equivalence of categories (\cf \cite[Proof of Corollary 35]{KaLoMg2001}).

Often one wants a spherical structure on a fusion category. In the unitary case, we don't need to worry, because there is always (a unique up unitary to equivalence) spherical structure \cite{LoRo1997}.

The categorical dimension $d\rho$ of $\rho\in\End_0$ coincides  with the square root the minimal index $[M: \rho(M)]$.
A unitary fusion category $\cF$ is called \textbf{braided} if there is a natural family of unitaries $\varepsilon(\rho,\sigma)\in
\Hom(\rho\sigma,\sigma\rho)$. It is called unitary, \ie \textbf{unitary modular tensor category (UMTC)} if 
$\varepsilon(\rho,\sigma)\varepsilon(\sigma,\rho)=1=1_{\sigma\circ\rho}$ for all $\rho\in\cF$ implies 
$[\sigma]=N[\id]$.

One source of UMTCs are \textbf{quantum doubles of subfactors}.
Given $N\subset M$, we take $S=M\otimes M^\op$ and $\bim S\cC S =\langle \rho\otimes\sigma^\op:\rho,\sigma\in \bim[N\subset M] M \cF M\rangle$ 
which equivalent with the fusion category $\bim[N\subset M] M \cF M \boxtimes (\bim[N\subset M]M\cF M)^\op$.
Then there is a subfactor $\iota_{T\subset S} (T)\subset S$ and 
$\bim T \cC T :=\langle \rho\prec \bar\iota_{T\subset S} \beta\iota_{T\subset S} :\beta \in\bim S\cC S\rangle$
can be identified with the category $Z(\bim[N\subset M] M \cF M)$, which is a UMTC by \cite{Mg2003II}.
Here $Z(\cF) $ denotes the \textbf{unitary (Drinfeld) center} of a unitary fusion category $\cF$.

We could have started with $N$ and $\bim[N\subset M] N\cF N$ and obtain $Z(\bim [N\subset M]N\cF N)$ which is braided equivalent with $Z(\bim M\cF M)$.
Indeed, $\bim[N\subset M] N\cF N$ and $\bim [N\subset M]M\cF M$  are (weakly) Morita equivalent and Morita equivalent fusion categories
have braided equivalent Drinfeld centers by combining \cite{Sc2001} with \cite{Mg2003}.
Therefore we denote the obtained UMTC $Z(\bim[N\subset M] M\cF M)$ by $\cD(N\subset M)$ and call it \textbf{the quantum double} of $N\subset M$.
There is a Galois correspondence between intermediate subfactors $S\subset A \subset T$ and subfusion categories $\cG\subset \bim M \cC M$ \cite{Iz2000}.

\section{Conformal and Completely Rational Nets}
A conformal net is a mathematical prescription of a chiral conformal quantum field theory on the circle using operator 
algebras. A well-behaving family of conformal nets are the so-called completely rational nets, which have 
a representation theory similar to finite groups and quantum groups at root of unity.

We denote by $\cI$ the set of proper intervals $I\subset \Sc$ on the circle and 
by $I'=\Sc\setminus \overline{I}$ the opposite interval.
\label{sec:CN}
By a conformal net $\A$, we mean a local M\"obius covariant net on the circle. 
It associates with every 
proper interval $I\in\cI$ a von Neumann algebra $\A(I)\subset \B(\Hil_\A)$
on a fixed Hilbert space $\Hil$, such that the following properties hold:
    \begin{enumerate}[{\bf A.}]
        \item \textbf{Isotony.} $I_1\subset I_2$ implies 
            $\A(I_1)\subset \A(I_2)$.
        \item \textbf{Locality.} $I_1  \cap I_2 = \emptyset$ implies 
            $[\A(I_1),\A(I_2)]=\{0\}$.
        \item \textbf{Möbius covariance.} There is a unitary representation
            $U$ of $\Mob$ on $\Hil$ such that 
            $  U(g)\A(I)U(g)^\ast = \A(gI)$.
        \item \textbf{Positivity of energy.} $U$ is a positive energy 
            representation, i.e. the generator $L_0$ (conformal Hamiltonian) 
            of the rotation subgroup $U(z\mapsto \e^{\ima \theta}z)=\e^{\ima \theta L_0}$
            has positive spectrum.
        \item \textbf{Vacuum.} There is a (up to phase) unique rotation 
            invariant unit vector $\Omega \in \Hil$ which is 
            cyclic for the von Neumann algebra $\A:=\bigvee_{I\in\cI} \A(I)$.
    \end{enumerate}
A conformal net $\A$ is called \textbf{completely 
rational} if it 
\begin{enumerate}[{\bf A.}]
  \setcounter{enumi}{5}
  \item fulfills the 
    \textbf{split property}, \ie 
    for $I_0,I\in \cI$ with $\overline{I_0}\subset I$ the inclusion 
    $\A(I_0) \subset \A(I)$ is a split inclusion, namely there exists an 
    intermediate type I factor $M$, such that $\A(I_0) \subset M \subset \A(I)$.
  \item is 
 \textbf{strongly additive}, \ie for $I_1,I_2 \in \cI$ two adjacent intervals
obtained by removing a single point from an interval $I\in\cI$ 
the equality $\A(I_1) \vee \A(I_2) =\A(I)$ holds.
  \item for $I_1,I_3 \in \cI$ two intervals with disjoint closure and 
    $I_2,I_4\in\cI$  the two components of $(I_1\cup I_3)'$, the 
    \textbf{$\mu$-index} of $\A$
    \begin{equation*}
      \mu(\A):= [(\A(I_2) \vee \A(I_4))': \A(I_1)\vee \A(I_3) ]
    \end{equation*}
 is finite.
\end{enumerate}
All known examples of completely rational nets also turn out to be covariant with respect to a projective representation of the  diffeomorphism group
of the circle and this leads to representation of the Virasoro algebra, but we just assume M\"obius covariance, although 
the term conformal net often refers to diffeomorphism covariant nets.

Examples of completely rational 
nets are:
\begin{itemize}
  \item Diffeomorphism covariant nets with central charge $c<1$
    \cite{KaLo2004}.
  \item The nets $\A_L$ where $L$ is a positive even lattice
    \cite{DoXu2006} which contain as a special case \cite{Bi2012}
    loop group nets $\A_{G,1}$ at level 1 for $G$ a 
    compact connected, simply connected  simply-laced Lie group.
  \item The loop group nets $\A_{\SU(n),\ell}$ for $\SU(n)$ at level $\ell$.
    \cite{Xu2000}.
\end{itemize}
Further examples of rational conformal nets come from standard constructions:
\begin{itemize}
  \item 
Finite index extensions and subnets of completely rational 
nets. Namely, let $\A\subset \cB$ be a finite subnet \ie $[\cB(I):\A(I)]<\infty$ for
some (then all) $I\in\cI$, then $\A$ is completely rational iff $\cB$ is
completely rational \cite{Lo2003}, in particular orbifolds $\A^G$ of completely
rational nets $\A$ with $G$ a
finite group are completely rational.
  \item Let $\A\subset\cB$ be a co-finite subnet, \ie
$[\cB(I),\A(I)\vee\A^\mathrm{c}(I)] <\infty$ for some (then all) $I\in\cI$,
where the \textbf{coset net} $\A^\mathrm{c}$ is defined by
$\A^\mathrm{c}(I)=\A'\cap\cB(I)$ with $\A'=(\vee_{I\in\cI}\A(I))'$. Then
$\cB$ is completely rational iff $\A$ and $\A^\mathrm{c}$ are completely
rational \cite{Lo2003}.
\end{itemize}

A \textbf{representation} $\pi$ of $\A$ is a family of unital representations 
$\pi=\{\pi_I\colon\A(I)\to \B(\Hil_\pi)\}_{I\in\cI}$ on a common Hilbert space $\Hil_\pi$ which are compatible, i.e.\  $\pi_J\restriction \A(I) =\pi_I$ for $I\subset J$. An example is the trivial representation $\pi_0=\{\pi_{0,I}=\id_{\A(I)}\}$ on 
the defining Hilbert space $\Hil$.
Let us fix through out an abitrary interval $I\in \cI$.
Every 
representation $\pi$ with $\Hil_\pi$ separable 
turns out to be equivalent to a representation \textbf{localized} in $I$, \ie 
$\rho$ on $\Hil$, such that $\rho_J=\id_{\A(J)}$ for $J\cap I=\emptyset$. Then Haag duality implies
that $\rho=\rho_{I}$ is an endomorphism of $\A(I)$.
The \textbf{statistical dimension} of a localized endomorphism $\rho$ is given by
$d\rho=[N:\rho(N)]^{\frac12}$ and we will restrict to endomorphisms with finite statistical dimension. 

The category $\Rep^I(\A)$ of representations of $\A$ with finite statistical dimension which are localized in $I$ naturally is  
a full and replete subcategory of the rigid C$^\ast$ tensor category of endomorphisms $\End_0(\A(I))$.
In particular, this gives the representations of $\A$ the structure of a tensor category \cite{DoHaRo1971}. It has a natural \textbf{braiding}, which is completely fixed
by asking that if $\rho$ is localized in $I_1$ and $\sigma$ in $I_2$ where $I_1$ is left of $I_2$ inside $I$ then $\varepsilon(\rho,\sigma)=1$
 \cite{FrReSc1989}.

\begin{prop}[\cite{KaLoMg2001}] Let $\A$ be completely rational net, then $\Rep^I(\A)$ is a UMTC
and $\mu_\A=\dim(\Rep^I(\A))$, where $\dim(\cC)=\sum_{\rho\in\Irr(\cC)}(d\rho)^2$ is the \textbf{global dimension}.
\end{prop}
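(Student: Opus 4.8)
The plan is to follow the strategy of \cite{KaLoMg2001}. From the setup above, $\Rep^I(\A)$ is already a braided C${}^\ast$-tensor category with conjugates, hence rigid and semisimple, with the braiding fixed as above; what remains is to establish three points: (i) there are finitely many isomorphism classes of irreducibles, so that $\Rep^I(\A)$ is a unitary fusion category; (ii) $\mu_\A=\sum_{\rho\in\Irr(\Rep^I(\A))}(d\rho)^2=\dim(\Rep^I(\A))$; and (iii) the braiding is non-degenerate.

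Everything runs through the \emph{$2$-interval inclusion}. Fix $I_1,I_3\in\cI$ with disjoint closures, let $I_2,I_4$ be the two components of $(I_1\cup I_3)'$, put $E=I_1\cup I_3$, and set $\A(E)=\A(I_1)\vee\A(I_3)$ and $\hat\A(E)=(\A(I_2)\vee\A(I_4))'$, so that by the definition of complete rationality $[\hat\A(E):\A(E)]=\mu_\A<\infty$; the split property makes $\A(E)$ and $\hat\A(E)$ factors, so index theory applies. For (i) I would take finitely many pairwise inequivalent irreducible endomorphisms $\rho_1,\dots,\rho_k$ localized in $I_1$, transport conjugates $\bar\rho_j$ into $I_3$ by charge transporters, and form the isometric intertwiners solving the conjugate equations for each pair $\rho_j,\bar\rho_j$. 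Because Haag duality fails precisely for disconnected regions, these intertwiners lie in $\hat\A(E)$ but not in $\A(E)$, and a Cuntz-algebra / Pimsner--Popa estimate shows that the von Neumann algebra they generate together with $\A(E)$ is an intermediate subfactor of index at least $\sum_j(d\rho_j)^2$ over $\A(E)$. Finiteness of $\mu_\A$ then forces $\Irr(\Rep^I(\A))$ to be finite and $\sum_\rho(d\rho)^2\le\mu_\A$; together with rigidity and semisimplicity this makes $\Rep^I(\A)$ a braided unitary fusion category, \ie pre-modular.

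For the reverse inequality in (ii) — where strong additivity (equivalently, Haag duality on the line) is really used — I would identify $\A(E)\subset\hat\A(E)$ with the Longo--Rehren inclusion attached to the now-finite braided category $\Rep^I(\A)$. Strong additivity lets one describe $\hat\A(E)$ explicitly enough to compute that the dual canonical endomorphism of $\A(E)\subset\hat\A(E)$ is $\bigoplus_{\rho\in\Irr(\Rep^I(\A))}\rho\otimes\bar\rho$, with $\rho$ localized in $I_1$ and $\bar\rho$ in $I_3$. Since the Longo--Rehren inclusion has index equal to the global dimension, this gives $\mu_\A=\sum_\rho(d\rho)^2=\dim(\Rep^I(\A))$, which is (ii); as a by-product the $2$-interval inclusion is irreducible and its representation category is the Drinfeld center $Z(\Rep^I(\A))$.

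Finally, for (iii) I must rule out a transparent irreducible sector $\sigma$, \ie one with $\varepsilon(\sigma,\rho)\varepsilon(\rho,\sigma)=1$ for all $\rho$; here I would invoke M\"uger's non-degeneracy argument, whose key input is the Longo--Rehren identification from (ii): a transparent $\sigma\ne\id$ is unitarily the same whether localized in $I_1$ or in $I_3$, so its trivial monodromy would manifest as a partial restoration of Haag duality for the disconnected region $E$, which is incompatible with $\hat\A(E)$ being precisely the Longo--Rehren extension of $\A(E)$ of index $\dim(\Rep^I(\A))$; equivalently, the M\"uger center of $\Rep^I(\A)$ is trivial. This yields that $\Rep^I(\A)$ is a UMTC with $\mu_\A=\dim(\Rep^I(\A))$. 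I expect the main obstacle to be the Longo--Rehren identification in (ii): pinning the dual canonical endomorphism down to exactly $\bigoplus_\rho\rho\otimes\bar\rho$ requires both the split property and, crucially, strong additivity, together with careful control of how localized endomorphisms and their intertwiners sit across the two intervals — once that is in place, both the value $\mu_\A=\dim$ and modularity follow comparatively quickly.
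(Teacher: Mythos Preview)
The paper does not give its own proof of this proposition: it is stated with attribution to \cite{KaLoMg2001} and no argument is supplied. There is therefore nothing in the present paper to compare your proposal against; the author is simply quoting the Kawahigashi--Longo--M\"uger theorem as background.

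That said, your outline is a faithful sketch of the strategy in \cite{KaLoMg2001}: the finiteness of $\Irr(\Rep^I(\A))$ and the inequality $\sum_\rho(d\rho)^2\le\mu_\A$ via the $2$-interval inclusion, the identification of $\A(E)\subset\hat\A(E)$ with the Longo--Rehren inclusion (using split and strong additivity) to get equality, and the deduction of non-degeneracy of the braiding from that identification. Your emphasis on the Longo--Rehren identification as the technical crux is accurate. If anything, one could note that in \cite{KaLoMg2001} the modularity step (iii) is phrased slightly differently---via the equality of the index with the global dimension forcing the absence of nontrivial degenerate sectors through Rehren's formula relating the two---rather than through a direct ``partial restoration of Haag duality'' picture, but the content is the same.
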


We call a completely rational net $\A$ with $\mu(\A)=1$ a \textbf{holomorphic net}. This means that every
representation is equivalent to a direct set of the trivial representation $\pi_0$. 
Examples of holomorphic nets are conformal nets $\A_L$ associated with even lattices $L$ constructed in \cite{DoXu2006}, the
the Moonshine net $\A_\natural$ \cite{KaLo2006} and certain framed nets 
    \cite{KaSu2012}.

Similar to the concept of subgroups, there is the notion of a subnet.
We write $\A\subset \cB$ or $\cB\supset \A$
if there is a representation $\pi=\{\pi_I\colon\A(I)\to\cB(I)\subset\B(\Hil_\cB)\}$ of $\A$ on 
$\Hil_\cB$  and an isometry $V\colon \Hil_\A\to \Hil_\cB$
with $V\Omega_\A=\Omega_\cB$ and $VU_\A(g)=U_\cB(g)V$.
We ask that further that
$Va=\pi_I(a)V$ for $I\in\cI$, $a\in\A(I)$. Define $p$ the projection 
on $\Hil_{\A_0}=\overline{\pi_I(\A(I))\Omega}$. Then $pV$ is a unitary equivalence
of the nets $\A$ on $\Hil_\A$ and  $\A_0$ defined by $\A_0(I)=\pi_I(\A(I))p$ on $\Hil_{\A_0}$.

\section{Subfactors arising from Conformal Nets}
If we have a completely rational net, then 
$A=\A(I)$ is the hyperfinite type III${}_1$ factor.
With $\bim A \cC A =\Rep^I(\A)\subset \End(A)$, 
every $\rho\in\bim A\cC A$ gives a subfactor $\rho(A)\subset A$.

But we are interested in taking all subfactors arising from
$\bim A\cC A$ and Morita equivalent fusion categories.

The philosophy is similar to the one if, that if we have one 
fusion category, \eg the even part of a subfactor,
we look into all Morita equivalent fusion categories and 
then into all subfactors arising this way \cite{GrSn2012}.
This is very much related to Ocneanu's maximal atlas \cite{Oc2001}
the Brauer--Picard groupoid 
of a fusion category \cite{EtNiOs2005}. 

An irreducible finite index overfactor $B\supset \iota(A)$ with $A=\A(I)$, where the dual canonical endomorphism $\bar\iota\circ\iota$ is in $\Rep^I(\A)$
gives rise to a relatively local extension $\cB\supset \A$ and all such extension arise 
in this way. The net $\cB$ is itself is in general not local but just relatively local to $\A$.
The net $\cB$ is local and therefore itself a completely rational net 
\cite{Lo2003}
if the extension comes from a commutative Q-system. 
Relatively local extensions arose by the study of boundary conformal field theory 
\cite{LoRe2004}; they give a boundary net by holographic projection.
By removing the boundary \cite{LoRe2009} one obtains a full CFT on Minkowski 
space (see below).

This motivates the following definition.
\begin{defi} 
\label{defi:Subfactor}
Let $\A$ be a completely rational net.
We say a \textbf{subfactor arises from $\A$} if
it is of the form $\beta(B_a)\subset B_b$, 
where 
$\beta\in \bim {B_b}\cC{B_a}=\langle \beta\prec \iota_b \rho \bar\iota_a :
\rho \in \Rep^I(\A) \rangle$,
with $A=\A(I)$, $B_\bullet=\cB_\bullet(I)$
and $\cB_a,\cB_b\supset \A$ irreducible relatively local
extensions.
\end{defi}

Let $\A$ be a completely rational net on the circle. 
The net $\A$ describes the chiral symmetries of a full two-dimensional CFT. 
For example the Virasoro net with central charge $c<1$ \cite{KaLo2004} is a completely rational net 
and the symmetries it describes are the 
the conformal transformations $\Diff_+(S_1)$ on the circle. For $c\geq 1$ 
the Virasoro net is not completely rational but one can consider larger class of symmetries, 
for example the loop group net 
$\A_{G,k}$ which is known to be completely rational for $G=\SU(N)$ 
and level $k\in\NN$
and which describes $\SU(N)$ gauge transformations.

A \textbf{full CFT based on $\A$} on Minkowski space $\RR^2$, is a maximal local extension $\cB_2$ of the net $\A_2$
which is defined on 
$$\A_2(I_+\times I_-) =\A(I_+)\otimes\A(I_-),\quad I_+\times I_- = \{ (t,x)\in \RR^2: t\pm x\in I_\pm\}\,,$$
where we see $\A$ by restriction as a net on $\RR$.
Since $\A$ is completely rational, $\Rep^I(\A)$ is a unitary modular tensor category $\cC$.
The category of representations of $\A_2$ is equivalent to the category $\cC\boxtimes \bar \cC$
and $\cB_2$ is completely characterized by a commutative Q-system in $\cC\boxtimes \bar \cC$.
With Kawahigashi and Longo, we have gotten a classification of full CFTs in terms of $\A$:
\begin{thm}[\cite{BiKaLo2014}]
\label{thm:Chiral}	
 Full CFTs based on $\A$, \ie maximal local extensions $\cB_2\supset\A_2$ are in one-to-one
correspondence with Morita equivalence classes of non-local extensions $\cB\supset \A$. 
\end{thm}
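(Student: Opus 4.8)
The plan is to reduce both sides of the claimed bijection to categorical data and then to recognize the resulting correspondence as the ``full centre'' construction. Write $\cC=\Rep^I(\A)$, a UMTC by \cite{KaLoMg2001}, so that, as recalled above, the representation category of $\A_2$ is $\cC\boxtimes\rev{\cC}$, and $\cC\boxtimes\rev{\cC}\cong Z(\cC)$ because $\cC$ is modular \cite{Mg2003II}. On the chiral side, (not necessarily local) finite-index extensions $\cB\supset\A$ correspond to Q-systems $\Theta$ in $\cC$ up to isomorphism, and two of them are Morita equivalent exactly when the Q-systems are, equivalently when they define equivalent module categories over $\cC$; so the right-hand side is the set of Morita classes of Q-systems in $\cC$. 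On the two-dimensional side, local extensions $\cB_2\supset\A_2$ correspond to commutative Q-systems $L$ in $\cC\boxtimes\rev{\cC}$, and $\cB_2$ is maximal exactly when $L$ is \emph{Lagrangian}, i.e.\ $d_L=\dim\cC$: the category of local $L$-modules is $\Rep(\cB_2)$, it is again a Drinfeld centre (local modules over a connected \'etale algebra in a Drinfeld centre again form one), and a Drinfeld centre has no proper local extension iff it is trivial, which happens iff $d_L^2=\dim(\cC\boxtimes\rev{\cC})$. So the left-hand side is the set of isomorphism classes of Lagrangian algebras in $Z(\cC)$.

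Next I would make the map $\cB\mapsto\cB_2$ explicit and check that it lands where expected. From the chiral extension $\cB\supset\A$, with inclusion $\iota$ and dual canonical endomorphism $\theta=\bar\iota\iota\in\cC$, one builds Rehren's left--right net on Minkowski space, whose defining Q-system in $\cC\boxtimes\rev{\cC}$ is the full centre of $\Theta$, with underlying object governed by the modular invariant matrix $Z_{\rho\sigma}=\langle\alpha^+_\rho,\alpha^-_\sigma\rangle$ of $\alpha^\pm$-induced sectors. I would verify that this Q-system is commutative -- this is the locality of $\cB_2$, generalizing the classical fact that the canonical left--right construction is local -- and that $\sum_{\rho,\sigma\in\Irr(\cC)}Z_{\rho\sigma}d_\rho d_\sigma=\dim\cC$, so that the full centre is Lagrangian and $\cB_2$ is maximal. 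Physically this is the passage from a boundary condition for $\A$ to a full CFT on $\RR^2$, via holographic projection and removal of the boundary \cite{LoRe2004}.

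The crux is then to show that $\cB\mapsto\cB_2$, read on Q-systems as the full centre $\Theta\mapsto Z(\Theta)$, factors through Morita equivalence and is a bijection onto the Lagrangian algebras: $Z(\Theta_1)\cong Z(\Theta_2)$ in $\cC\boxtimes\rev{\cC}$ iff $\Theta_1$ and $\Theta_2$ are Morita equivalent, and every Lagrangian algebra in $Z(\cC)$ is a full centre (equivalently, Lagrangian algebras in $Z(\cC)$ correspond to indecomposable module categories over $\cC$). I would obtain this either by importing the categorical full-centre theorem (Davydov; Kong--Runkel; Davydov--M\"uger--Nikshych--Ostrik) once the Longo--Rehren notions of Q-system and module are matched with the categorical notions of connected algebra and module in the unitary setting, or by an operator-algebraic argument: an $\alpha$-induction analysis showing that $\cB_2$ recovers the module category of $\Theta$ over $\cC$ -- hence the Morita class of $\Theta$ -- together with a ``boundary'' argument producing, from any maximal local $\cB_2\supset\A_2$, a chiral extension $\cB\supset\A$ with $\cB_2$ as its full centre. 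Composing the three translations then yields the one-to-one correspondence $[\cB]\leftrightarrow[\cB_2]$; this last equivalence, rather than the bookkeeping dictionary steps, is where the real difficulty lies.
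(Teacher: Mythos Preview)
The present paper does not prove this theorem: it is quoted from \cite{BiKaLo2014} and no argument is given here. So there is no ``paper's own proof'' to compare against in this note; the content you are reconstructing is that of the cited reference.

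That said, your outline is an accurate summary of the strategy actually used in \cite{BiKaLo2014}. There one also translates both sides into categorical data in $\cC=\Rep^I(\A)$: chiral (possibly non-local) extensions become Q-systems in $\cC$, maximal local 2D extensions become commutative Q-systems in $\cC\boxtimes\rev\cC\cong Z(\cC)$ of maximal dimension (Lagrangian algebras), and the map $\cB\mapsto\cB_2$ is the $\alpha$-induction/full centre construction with underlying object governed by $Z_{\rho\sigma}=\dim\Hom(\alpha^+_\rho,\alpha^-_\sigma)$. The identification of the resulting bijection with the Morita classification is established there by operator-algebraic means (the $\alpha$-induction machinery of B\"ockenhauer--Evans--Kawahigashi and the Longo--Rehren Q-system formalism), which is exactly the ``operator-algebraic argument'' alternative you sketch; the purely categorical route via Kong--Runkel/Davydov/DMNO that you mention is equivalent once the unitary/Q-system dictionary is in place. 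Your caveat that the hard step is the bijectivity/Morita-invariance of the full centre, not the bookkeeping, is well placed.
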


Given two full CFTs $\cB_2^a,\cB_2^b\supset \A_2$, 
there is a notion of a defect line or phase boundary \cite{BiKaLoRe2014,BiKaLoRe2014-2} between 
the full conformal field theories $\cB_2^a$ and $\cB_2^b$
 on 2D Minkowski space, which is invisible if restricted to $\A_2$,
also called $\A$--topological. If a subfactor arises from $\A$ it comes from such an
	$\A$--topological $\cB^a_2$--$\cB^b_2$ defect. 
\begin{thm}[\cite{BiKaLoRe2014}]
	\label{thm:Defects}
	$\A$--topological $\cB^a_2$--$\cB^b_2$ defects are in one-to-one correspondence with 
sectors $\beta\in \bim {B_a}\cC{B_b}=\langle \beta\prec \iota_b \bim A\cC A \bar\iota_a\rangle$,
for $\bim A\cC A=\Rep^I(\A)$ with  $A=\A(I)$, $B_\bullet=\cB_\bullet(I)$
and $\cB_a,\cB_b\supset \A$ irreducible relatively local
   extension
	corresponding to the full CFT $\cB_2^a,\cB_2^b \supset \A_2$, respectively.
\end{thm}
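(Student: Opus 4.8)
The plan is to prove the bijection by exhibiting maps in both directions and showing they are mutually inverse, the essential idea being to reduce the two-dimensional geometry of the defect to chiral data via the left--right factorization $\A_2(I_+\times I_-)=\A(I_+)\otimes\A(I_-)$. First I fix the algebraic description of a defect: place the defect line along the time axis of $\RR^2$, so that double cones in the left wedge are governed by $\cB_2^a$ and those in the right wedge by $\cB_2^b$. A $\cB_2^a$--$\cB_2^b$ defect assigns a von Neumann algebra to each double cone, agreeing with $\cB_2^a$ (resp. $\cB_2^b$) away from the line on the left (resp. right), and satisfies isotony, locality for spacelike-separated cones, and covariance under the symmetries preserving the line. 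The $\A$--topological condition records that $\A_2\subset\cB_2^a,\cB_2^b$ and that the defect is \emph{transparent} to $\A_2$: the chiral subnets on both light rays are unaffected and the line may be translated transversally without changing the $\A_2$-local algebras.

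For the forward direction, given such a defect I analyze the algebra it assigns to a double cone $O$ crossing the line. Using the chiral factorization together with Theorem \ref{thm:Chiral}, which identifies $\cB_2^a,\cB_2^b$ with Morita classes of chiral extensions and so lets me fix representatives $\cB_a,\cB_b\supset\A$, I read off how $B_a=\cB_a(I)$ and $B_b=\cB_b(I)$ sit inside the defect algebra. Transparency to $\A_2$ forces the transition across the line to commute with the $\A$-action, so the resulting $B_a$--$B_b$ datum is built only from $\A$-sectors; concretely one extracts a sector $\beta$ with $\beta\prec\iota_b\rho\bar\iota_a$ for some $\rho\in\Rep^I(\A)$, that is $\beta\in\bim{B_a}\cC{B_b}=\langle\beta\prec\iota_b\bim A\cC A\bar\iota_a\rangle$. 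Equivalent defects yield unitarily equivalent sectors.

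For the backward direction I reverse this: given $\beta\in\bim{B_a}\cC{B_b}$ I reconstruct a defect net by gluing $\cB_2^a$ and $\cB_2^b$ across the line, using $\beta$ as the connecting bimodule and verifying isotony, locality, and covariance from the corresponding properties of $\cB_2^a,\cB_2^b$ and the intertwining relations defining $\beta$; since $\beta$ is generated by $\A$-sectors, the glued net contains $\A_2$ and the defect is transparent to it, hence $\A$--topological. I then show the two constructions are mutually inverse: (sector $\to$ defect $\to$ sector) recovers $\beta$ up to unitary equivalence by construction, while (defect $\to$ sector $\to$ defect) recovers the original defect by a reconstruction/uniqueness argument in the spirit of Popa's theorem (as already used above to realize fusion categories uniquely inside $\End_0(M)$), the defect net being determined up to isomorphism by its chiral bimodule.

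The hard part will be the holographic reduction in the forward direction: proving rigorously that the two-dimensional defect data reduces \emph{exactly} to the chiral $B_a$--$B_b$ bimodule, losing no information and introducing none. This requires controlling the interplay between Minkowski-space locality and the chiral factorization, and using complete rationality---Haag duality, finite $\mu$-index, and the modular structure---to guarantee that the extracted bimodule lies precisely in $\langle\beta\prec\iota_b\bim A\cC A\bar\iota_a\rangle$, neither larger nor smaller. Establishing this exact match, together with the uniqueness needed for bijectivity, is where the main work lies.
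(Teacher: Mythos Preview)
The paper does not contain a proof of this theorem: it is stated with the attribution \cite{BiKaLoRe2014} and no argument is given here. So there is nothing in the present paper to compare your proposal against; the actual proof lives in the cited reference on phase boundaries in algebraic conformal QFT.

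As for your sketch itself, the overall shape---extract a chiral $B_a$--$B_b$ sector from the defect via the left/right factorization, and conversely glue $\cB_2^a$ and $\cB_2^b$ along the line using a given $\beta$---is the right caricature of how the argument in \cite{BiKaLoRe2014,BiKaLoRe2014-2} proceeds. But several steps are only gestures. You never specify what an $\A$--topological defect \emph{is} at the level of nets (in the cited work it is formulated via a common extension $\cD_2$ receiving both $\cB_2^a$ and $\cB_2^b$ and relative locality conditions), and without that the ``extraction'' of $\beta$ cannot be made precise. Your claim that transparency forces $\beta\prec\iota_b\rho\bar\iota_a$ is the crux and needs the $\alpha$-induction/full-centre machinery, not just ``commuting with the $\A$-action''. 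Finally, invoking Popa's theorem for the uniqueness step is out of place: in \cite{BiKaLoRe2014} the bijection is established through the classification of Q-systems and their bimodules in $\Rep^I(\A)$ (equivalently, module categories over the modular tensor category), not by appealing to standard-invariant rigidity. If you want to turn your outline into a proof, you should recast it in the Q-system/$\alpha$-induction language used in \cite{BiKaLoRe2014-2}.
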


\begin{rmk}
\begin{enumerate}
  \item The conditions that $B_\bullet$ comes from  relatively local
extension is equivalent to saying that $\bim A\cC A$ 
and $\bim {B_\bullet}\cC {B_\bullet}$ are weakly Morita equivalent in the sense of
  \cite{Mg2003}.
  \item $\bim {B_a}\cC {B_b}$ is a bimodule category 
  over  $\bim {B_a}\cC {B_a}$  and  $\bim {B_b}\cC {B_b}$.  
\item Theorem \ref{thm:Chiral} and \ref{thm:Defects} Show that non-local extensions give 
full CFTs and endomorphisms between these extensions classify 
topological defects between this full CFTs. So if a subfactor arises 
from a conformal net $\A$ in the sense of Definition \ref{defi:Subfactor}, then the subfactor describes a
topological defect. 
\end{enumerate}
\end{rmk}

So far we have seen two sources of UMTCs:
\begin{itemize}
\item Quantum double of subfactors or equivalently Drinfeld centers of unitary fusion categories. 
\item Representation categories of completely rational nets.
\end{itemize}
They for sure don't give the same examples, \eg $\A_{\SU(2)_1}$ has no local extensions and
non-trivial representation category and we have the following characterization of 
nets having quantum doubles as representation category.
\begin{prop} Let $\A$ be a completely rational conformal net. Then the following are equivalent:
\begin{enumerate}
\item
$\Rep(\A)\cong Z(\cF)$ for some unitary fusion category $\cF$.
\item
There exists a holomorphic net $\cB\supset \A$.
\end{enumerate}
Every $\cF$ with $\Rep(\A)\cong Z(\cF)$ gives a particular holomorphic net $\cB_\cF\supset \A$
and there is a Galois correspondence between full subcategories $\cG \subset \cF$ 
and intermediate nets $\cB_F\supset \cB\supset \A$.  
\end{prop}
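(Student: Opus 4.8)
The plan is to work entirely inside the unitary modular tensor category $\cC:=\Rep(\A)$ and to translate both conditions into the existence of a \emph{Lagrangian} algebra in $\cC$ (equivalently, of a holomorphic local extension of $\A$), using the Longo--Rehren correspondence between finite-index extensions of completely rational nets and Q-systems. First I would recall the dictionary: irreducible finite-index \emph{local} extensions $\cB\supset\A$ are the same as connected commutative Q-systems $\theta$ in $\cC$, with $[\cB(I):\A(I)]=d(\theta)$; such a $\cB$ is again completely rational \cite{Lo2003}, one has $\Rep(\cB)\simeq\cC^0_\theta$ (local $\theta$-modules), and combining the Kawahigashi--Longo--Müger identification $\mu=\dim\Rep$ \cite{KaLoMg2001} with the Kirillov--Ostrik dimension formula gives $\dim\Rep(\cB)=\dim\cC^0_\theta=\dim\cC/d(\theta)^2$.

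For (2)$\Rightarrow$(1): a holomorphic $\cB\supset\A$ yields such a $\theta$ with $\dim\cC^0_\theta=\mu(\cB)=1$, hence $d(\theta)^2=\dim\cC=\mu(\A)$; that is, $\theta$ is a Lagrangian algebra in $\cC$. I would then set $\cF:=\cC_\theta$, the category of \emph{all} $\theta$-modules, which is a unitary fusion category, and invoke the theory of connected étale (Lagrangian) algebras in nondegenerate braided fusion categories (Müger; Davydov--Müger--Nikshych--Ostrik; Kong--Runkel) to produce a braided equivalence $\cC\simeq Z(\cF)$, which is statement (1).

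For (1)$\Rightarrow$(2): suppose $\cC\simeq Z(\cF)$ for a unitary fusion category $\cF$. The Drinfeld center carries its canonical algebra $R=I(\mathbf 1)$, where $I$ is the two-sided adjoint of the forgetful functor $Z(\cF)\to\cF$; $R$ is a connected \emph{commutative} Q-system, with $Z(\cF)_R\simeq\cF$, $Z(\cF)^0_R\simeq\mathrm{Vec}$ and $d(R)^2=(\dim\cF)^2=\dim Z(\cF)$, so $R$ is Lagrangian. Transporting $R$ across the equivalence to a commutative Q-system in $\cC=\Rep(\A)$ and applying Longo--Rehren, I obtain an irreducible local finite-index (hence completely rational) extension $\cB=\cB_\cF\supset\A$ with $\Rep(\cB)\simeq\cC^0_R\simeq\mathrm{Vec}$, \ie $\mu(\cB)=1$: $\cB$ is holomorphic. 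For the Galois correspondence, intermediate nets $\cB_\cF\supset\cB'\supset\A$ are, again by Longo, in order-preserving bijection with the connected Q-subsystems of $R$; since the category of $R$-modules is $\cF$, Izumi's Galois correspondence for intermediate subfactors \cite{Iz2000} identifies these Q-subsystems with the full fusion subcategories $\cG\subseteq\cF$, giving the asserted bijection.

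The only genuinely non-formal ingredient is the categorical equivalence ``$\cC$ contains a Lagrangian algebra $\iff$ $\cC\simeq Z(\cF)$ for some fusion category $\cF$'', so the hard part is really to invoke it with the correct normalizations: one must check that the net-theoretic condition $d(\theta)^2=\dim\cC$ is exactly the categorical notion of Lagrangian, that the canonical algebra $I(\mathbf 1)\in Z(\cF)$ is commutative (standard, but to be cited precisely), and that the relevant unitarity upgrades — $Z(\cF)$ unitary when $\cF$ is, $\cC_\theta$ a unitary fusion category, and $R$ (resp.\ $\theta$) a genuine Q-system rather than merely an étale algebra — all go through by the usual $*$-categorical arguments. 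Everything else is bookkeeping supplied by the Longo--Rehren dictionary.
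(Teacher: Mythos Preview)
Your argument is correct and, at the level of content, is the same as the paper's; the difference is purely one of language. For $(1)\Rightarrow(2)$ the paper takes the Longo--Rehren subfactor associated with $\bim M\cF M$ as the extension $\A(I)\subset\cB(I)$; this is exactly your canonical commutative Q-system $R=I(\mathbf 1)\in Z(\cF)$ transported to $\Rep(\A)$, only named in subfactor rather than categorical terms. For $(2)\Rightarrow(1)$ the paper defines $\cF$ as the $\alpha^+$-induction category $\langle\beta\prec\alpha^+_\rho:\rho\in\Rep(\A)\rangle$, whereas you take $\cF=\cC_\theta$, the category of all $\theta$-modules; these two fusion categories are canonically equivalent (free $\theta$-modules are precisely the $\alpha^+$-inductions, and every $\theta$-module is a summand of a free one), so the resulting identification $\Rep(\A)\simeq Z(\cF)$ is the same. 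Your invocation of the Lagrangian/DMNO machinery is a clean packaging of what the paper defers to \cite{Bi2015}; the paper's phrasing has the mild advantage of staying inside the operator-algebraic framework already set up, while yours makes the dimension bookkeeping ($d(\theta)^2=\dim\cC$) and the role of the Kirillov--Ostrik formula more transparent. The Galois correspondence via Izumi \cite{Iz2000} is handled identically.
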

\begin{proof} If $\Rep(\A)\cong Z(N\subset M)$ then there is an extension, 
such that 
$\A(I)\subset \cB(I)$ is isomorphic to the Longo--Rehren subfactor of $\bim M\cF M$.
Conversely, given $\cB\supset \A$ we get that $\Rep(\A) \cong Z(\cF)$ with 
$\cF=\langle \beta\prec \alpha^+_\rho:\rho\in\Rep(\A)$ the category coming from $\alpha^+$-induction.
See \cite{Bi2015} for details.
\end{proof}
To find a net $\A$ which realizes the quantum double $\cD(N\subset M)$ is the mentioned Problem
\ref{prob:QDnet}. We mention that if we find one net $\A$ with $\Rep(\A)\cong \cD(N\subset M)$ there are infinitely many
examples since for every holomorphic net $\cB$ we have $\Rep(\A)\cong\Rep(\A\otimes \cB)$ and there 
a infinitely man holomorphic nets.

\begin{prop}
  \label{prop:Cofinite} 
  Given a completely rational  net $\A$, then the following are equivalent:
  \begin{enumerate}
    \item  There is a completely rational net 
  $\Rep(\tilde \A)\cong \rev{\Rep(\A)}$
    \item  There is a holomorphic net $\cB$,
  such that $\A\subset \cB$ is normal and cofinite.
  \end{enumerate}
\end{prop}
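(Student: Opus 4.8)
The plan is to route everything through the previous Proposition --- the characterisation of completely rational nets $\A$ with $\Rep(\A)\cong Z(\cF)$ --- together with the standard fact \cite{Mg2003II} that for a modular tensor category $\cC$ there is a braided equivalence $Z(\cC)\cong\cC\boxtimes\rev{\cC}$, under which the canonical Lagrangian algebra of $Z(\cC)$ is identified with $\Gamma_\cC:=\bigoplus_{\rho\in\Irr(\cC)}\rho\boxtimes\bar\rho$.

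To prove $(1)\Rightarrow(2)$ I would proceed as follows. Suppose $\tilde\A$ is completely rational with $\Rep(\tilde\A)\cong\rev{\Rep(\A)}$. Then $\A\otimes\tilde\A$ is completely rational and $\Rep(\A\otimes\tilde\A)\cong\Rep(\A)\boxtimes\rev{\Rep(\A)}\cong Z(\Rep(\A))$, so the previous Proposition, applied to the net $\A\otimes\tilde\A$ with the unitary fusion category taken to be $\Rep(\A)$ (tensor structure only, braiding forgotten), produces a holomorphic net $\cB=\cB_{\Rep(\A)}\supset\A\otimes\tilde\A$, realised as the Longo--Rehren extension along the canonical Q-system $\Gamma_{\Rep(\A)}$ sitting inside $\Rep(\A)\boxtimes\rev{\Rep(\A)}\cong\Rep(\A\otimes\tilde\A)$. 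Since $\A\subset\A\otimes\tilde\A\subset\cB$ this already gives a holomorphic $\cB\supset\A$; it remains to check normality and cofiniteness. Cofiniteness is immediate, as $\tilde\A(I)$ commutes with $\A(I)$ in $\cB(I)$, so $\A(I)\vee\A^\mathrm{c}(I)\supset(\A\otimes\tilde\A)(I)$, which has finite index $\dim(\Rep(\A))$ in $\cB(I)$. Normality, and in fact $\A^\mathrm{c}=\tilde\A$ and $(\A^\mathrm{c})^\mathrm{c}=\A$, then follow from the diagonal shape of $\Gamma_{\Rep(\A)}$ --- it has trivial intersection with each tensor factor and is invariant under the flip of the two factors --- via the usual correspondence between intermediate subnets of a Q-system extension and sub-Q-systems.

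To prove $(2)\Rightarrow(1)$ I would set $\tilde\A:=\A^\mathrm{c}$. Since $\cB$ is holomorphic it is completely rational, so by \cite{Lo2003} the cofinite subnet $\A$ and its coset $\tilde\A$ are completely rational. The inclusion $\A\otimes\tilde\A\subset\cB$ is a finite-index local extension, hence comes from a connected commutative Q-system $\Theta\in\Rep(\A)\boxtimes\Rep(\tilde\A)$, with $\Rep(\cB)$ the (trivial) category of local $\Theta$-modules. Multiplicativity of the $\mu$-index and $\mu(\cB)=1$ give $d\Theta=[\cB(I):(\A\otimes\tilde\A)(I)]$ and $(d\Theta)^2=\mu(\A)\mu(\tilde\A)=\dim(\Rep(\A))\dim(\Rep(\tilde\A))$, so $\Theta$ is a Lagrangian algebra in $\Rep(\A)\boxtimes\Rep(\tilde\A)$, while normality of $\A\subset\cB$ says that $\Theta$ has trivial intersection with each of the two tensor factors. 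A Lagrangian algebra in a Deligne product $\cC_1\boxtimes\cC_2$ of modular tensor categories which is trivial on both factors must be the graph of a braided equivalence $\cC_1\cong\rev{\cC_2}$ (this is the categorical content of the theory of mirror extensions); applying this to $\Theta$ yields $\Rep(\tilde\A)\cong\rev{\Rep(\A)}$.

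The step I expect to be the main obstacle is precisely this last categorical input. A bare Lagrangian algebra in $\cC_1\boxtimes\cC_2$ only gives Witt equivalence of $\cC_1$ and $\rev{\cC_2}$, which is strictly weaker than braided equivalence; it is the normality hypothesis --- triviality of $\Theta$ on each factor, equivalently $(\A^\mathrm{c})^\mathrm{c}=\A$ --- that excludes the larger solutions and forces $\Theta$ into the diagonal form. Making the translation between the operator-algebraic normality/coset condition and this categorical triviality condition precise, and invoking the corresponding uniqueness statement for such Lagrangian algebras, is the technical heart of the proof; everything else is bookkeeping with $\mu$-indices and an application of the previous Proposition.
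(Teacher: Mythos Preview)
Your proposal is correct and follows exactly the approach the paper sketches: the Longo--Rehren extension of $\A\otimes\tilde\A$ for $(1)\Rightarrow(2)$, and the coset net $\tilde\A=\A^{\mathrm c}$ for $(2)\Rightarrow(1)$. The paper's proof is only a two-line outline deferring details to \cite{Bi2015}; you have essentially reconstructed those details, including the identification of the Lagrangian-algebra step as the technical core of $(2)\Rightarrow(1)$.
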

\begin{proof}
We sketch the proof, more details are in \cite{Bi2015}.
If (1) is true we take $\cB$ the Longo--Rehren extension of $\A\otimes\tilde \A$. Conversely, if (2)
is true, we take $\tilde \A$ to be the coset of $\A\subset \cB$.
\end{proof}

\begin{prop} 
Let $N\subset M$ be a finite index, finite depth subfactor.
\begin{enumerate}
  \item If there is a conformal net with $\Rep(\A)$ braided equivalent to $\cD(N\subset M)$, 
    then $N\subset M$ arises from $\A$. 
Actually it is enough that $\Rep(\A)$ contains a full subcategory braided equivalent to $\cD(N\subset M)$.
  \item Conversely, if $N\subset M$ arises from $\A$, then there is 
  a 2D conformal net $\cB_2\supset \A_2$ with $\Rep(\cB_2)\cong \cD(N\subset M)$.
  \item Further, if $N\subset M$ arises from $\A$, and there is a net $\tilde \A$, with $\Rep(\tilde \A)\cong \rev{\Rep(\A)}$
    then there is a conformal net $\cB\supset \A\otimes \tilde \A$ with $\Rep(\cB)\cong \cD(N\subset M)$.
\end{enumerate}
\end{prop}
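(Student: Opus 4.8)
The plan is to handle the three parts with a single toolbox: the correspondence between irreducible relatively local extensions $\cB_\bullet\supset\A$ and connected \'etale algebras (Q-systems) in $\Rep^I(\A)$; the $\alpha$-induction description of $\cB_a$--$\cB_b$ sectors $\beta\in\bim{B_b}\cC{B_a}$ as bimodules over those algebras; the correspondence between fusion categories realized inside a conformal net and local 2D extensions; and Popa's strong rigidity \cite{Po1993} for passing from ``equal standard invariant'' to ``isomorphic subfactor'' on the hyperfinite type III${}_1$ factor. For (1) I would write $\cF=\bim M\cF M$, so that $\cD(N\subset M)=Z(\cF)$, realize $Z(\cF)$ as a full braided subcategory of $\Rep^I(\A)$, and --- $Z(\cF)$ being modular --- invoke M\"uger's splitting \cite{Mg2003} to obtain $\Rep^I(\A)\cong Z(\cF)\boxtimes\cC'$ with $\cC'$ the centralizer. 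Choosing, compatibly, the Lagrangian algebras $L_M,L_N\in Z(\cF)$ whose dual fusion categories are $\bim M\cF M$ and $\bim N\cF N$ (so that the $L_M$--$L_N$ bimodules in $Z(\cF)$ reconstruct the $M$--$N$ bimodule category $\bim M\cF N$ of $N\subset M$), I regard $L_M$ and $L_N$ as connected \'etale algebras in $\Rep^I(\A)$ and take the associated irreducible relatively local extensions $\cB_b,\cB_a\supset\A$. Then $\alpha$-induction gives $\bim{B_b}\cC{B_b}\cong\bim M\cF M\boxtimes\cC'\supseteq\bim M\cF M$, likewise $\bim{B_a}\cC{B_a}\supseteq\bim N\cF N$, and the sector category $\bim{B_b}\cC{B_a}=\langle\beta\prec\iota_b\rho\bar\iota_a\rangle$ contains the image of $\bim M\cF N$; I let $\beta$ be the image of $[\iota]$. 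Since tensor functors respect composition and subobjects, $\beta\bar\beta$ and $\bar\beta\beta$ correspond to $\iota\bar\iota$ and $\bar\iota\iota$, so $\beta(B_a)\subset B_b$ has even part $\bim M\cF M$, dual even part $\bim N\cF N$, hence the standard invariant of $N\subset M$; Popa then gives $\beta(B_a)\subset B_b\approx N\subset M$, i.e. $N\subset M$ arises from $\A$ in the sense of Definition~\ref{defi:Subfactor}. (When $\Rep^I(\A)=Z(\cF)$ one has $\cC'=\mathrm{Vec}$ and $\cB_b,\cB_a$ are holomorphic.)

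For (2), replacing $N\subset M$ by the isomorphic subfactor I may assume $M=B_b=\cB_b(I)$, $N=\beta(B_a)$, with $\cB_b\supset\A$ an irreducible relatively local extension. As in (1), $\beta\bar\beta$ lies in $\bim{B_b}\cC{B_b}=\langle\iota_b\rho\bar\iota_b:\rho\in\Rep^I(\A)\rangle$, so the even part $\bim M\cF M=\langle\tau\prec(\beta\bar\beta)^n\rangle$ is a fusion subcategory of $\bim{B_b}\cC{B_b}$; this is exactly where the hypothesis that $N\subset M$ arises from $\A$ is used. Let $\cB_2^b\supset\A_2$ be the full CFT corresponding, via Theorem~\ref{thm:Chiral}, to the Morita class of $\cB_b$. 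I would then use the Galois correspondence (\cf \cite{Bi2015,BiKaLoRe2014}) between fusion subcategories $\cG\subseteq\bim{B_b}\cC{B_b}$ and intermediate 2D nets $\A_2\subseteq\cB_2\subseteq\cB_2^b$, under which $\Rep(\cB_2)$ is braided equivalent to the Drinfeld center $Z(\cG)$ --- the extreme cases $\cG=\bim{B_b}\cC{B_b}$ and $\cG=\mathrm{Vec}$ recovering $\cB_2=\A_2$ (with $\Rep(\A_2)\cong Z(\bim{B_b}\cC{B_b})$ by weak Morita equivalence) and $\cB_2=\cB_2^b$. Choosing $\cB_2$ to correspond to $\cG=\bim M\cF M$, it is a completely rational 2D conformal net extending $\A_2$ with $\Rep(\cB_2)\cong Z(\bim M\cF M)=\cD(N\subset M)$.

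For (3), the commutative Q-system $E$ in $\Rep(\A_2)\cong\Rep^I(\A)\boxtimes\rev{\Rep^I(\A)}$ that defines $\cB_2$ in (2) is literally a commutative Q-system in $\Rep^I(\A\otimes\tilde\A)\cong\Rep^I(\A)\boxtimes\Rep^I(\tilde\A)$, because $\Rep(\tilde\A)\cong\rev{\Rep(\A)}$ makes these the same braided category. I would take $\cB\supseteq\A\otimes\tilde\A$ to be the corresponding local chiral extension, again completely rational; since the representation category of a local extension depends only on its Q-system inside the ambient braided category, one gets $\Rep(\cB)\cong\Rep(\cB_2)\cong\cD(N\subset M)$ as braided categories.

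I expect the main obstacle to be the Galois correspondence used in (2): showing that a fusion subcategory $\cG$ of the $\A$--bimodule category $\bim{B_b}\cC{B_b}$ is realized by an intermediate 2D net whose representation category is exactly the Drinfeld center $Z(\cG)$, i.e. correctly computing the relevant de-equivariantization of $\Rep(\A_2)$. This is the step that genuinely exploits ``$N\subset M$ arises from $\A$'' (it is what places $\bim M\cF M$ inside $\bim{B_b}\cC{B_b}$) and carries the heaviest categorical bookkeeping. A secondary delicate point is the check in (1) that $\beta(B_a)\subset B_b$ has exactly the standard invariant of $N\subset M$, with no spurious objects coming from the centralizer $\cC'$; this constrains the choices of the Lagrangian algebras $L_M,L_N$.
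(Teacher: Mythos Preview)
Your proposal is correct and matches the paper's proof in structure. For (2) and (3) you give exactly the paper's argument: embed an even part of $N\subset M$ inside the dual category $\bim{B_\bullet}\cC{B_\bullet}$ of an extension and apply the Galois correspondence between fusion subcategories and intermediate extensions with representation category the corresponding center; then transport the same Q-system to $\A\otimes\tilde\A$. The only cosmetic difference is that the paper works with the dual even part $\bim[N\subset M] N\cF N\subset\bim{B_a}\cC{B_a}$ where you use $\bim[N\subset M] M\cF M\subset\bim{B_b}\cC{B_b}$; since both have center $\cD(N\subset M)$ this is immaterial.

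For (1) the paper takes a slightly more hands-on route than your two-Lagrangian-algebra construction. It builds a single (dual Longo--Rehren) extension $\cB\supset\A$ with $\bim[N\subset M] N\cF N\subset\bim B\cC B$, and then realizes the Q-system $\bar\iota\iota\in\bim[N\subset M] N\cF N$ as a further overfactor $\tilde B\supset B$, observing that $\tilde B$ is still a relatively local extension of $\A$; the inclusion $\iota(B)\subset\tilde B$ is then by construction isomorphic to $N\subset M$. This is your construction in disguise --- your $L_N$ is the Longo--Rehren Q-system, and the paper's ``stacking'' of $\tilde B$ on $B$ is exactly the passage from $L_N$- to $L_M$-modules via $\iota$ --- but it has the advantage of sidestepping your flagged worry about spurious contributions from the centralizer $\cC'$: the subfactor $B\subset\tilde B$ is produced directly from $\bar\iota\iota$ rather than recovered from matching standard invariants, so no separate bookkeeping with $\cC'$ (and no explicit appeal to Popa at that step) is needed.
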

\begin{proof} The dual of the Longo--Rehren subfactor applied to $\bim N\cF N$  
gives an extension $\cB$ of $\A$ and it follows that $\bim B \cC B\supset \bim N\cF N$. 
We remark, that in the case  $\Rep(\A)\cong \cD(N\subset M)$ the net  $\cB$ is a holomorphic net. 
We can take an overfactor $\tilde B \supset B$ equivalent to $M\supset N$. This does in general not give a relatively local 
extension of $\cB$ but it gives a relatively local extension of $\tilde \cB\supset \A$ and the inclusion $\iota(B)\subset \tilde B$ does the job.

That $N\subset M$ arises from $\A$ means that there are two extensions $\cB_a,\cB_b\supset $ and for 
$B_\bullet =\cB(I)_\bullet$ there is a morphism $\beta\colon B_a \to B_b$,
such that $\beta(B_a) \subset B_b$ is isomorphic  to $N\subset M$.
But this means that the dual category $\bim {B_a} \cC {B_a}$ contains $\bim[N\subset M] N \cF N$ as a full subcategory.
Since $\Rep(\A_2)\cong Z(\Rep(\A)) \cong Z(\bim {B_a} \cC {B_a})$ it follows from Galois correspondence that there
is a local extension $\cB_2\supset \A_2$ with $\Rep(\cB_2) \cong Z(\bim[N\subset M] N\cF N)\equiv \cD(N\subset M)$.

The net $\A\otimes \tilde \A$ is a conformal net with $\Rep(\A\otimes \tilde \A)\cong Z(\Rep(\A)$, then by exactly the 
 same argument as before, there is a local extension $\cB\supset \A\otimes \tilde \A $ with $\Rep(\cB) \cong  \cD(N\subset M)$.
\end{proof}
In \cite{Bi2015} we used (3) and well-known constructions a to identify net $\A_{N\subset M}$ with $\Rep(\A_{N\subset M})\cong \cD(N\subset M)$ for all 
subfactors with index less than 4. It seems to be interesting to generalize this to other families of subfactors
 and fusion categories. Particular interesting are near group categories \cite{EvGa2014}, since all subfactors in the 
small index classifications 
besides extended Haagerup \cite{Ha1994} seem to be related to near group fusion categories. The double of the 2221 subfactor
or equivalently the $\ZZ_3+3$ near group category is realized by the loop group net $\A_{G_{2,3}}\otimes \A_{\SU(3)_1}$
and the $2^41$ subfactor or the $\ZZ_4+4$ near category is related to a unitary fusion category coming from the  conformal inclusion $\A_{\SU(3)_5}\subset \A_{\SU(6)_1}$ \cf \cite{Li2015}.
This gives hope that near group categories all come from rational nets.

We hope that we convinced the reader 
that the following are interesting problems.
\begin{itemize}
  \item Finding interesting finite index subnets $\A\subset \cB$ for $\cB$ a holomorphic net 
  which give new interesting subfactors/unitary fusion categories
  \item For interesting subfactors, find a completely rational net $\A$ with $\Rep(\A)\cong \cD(N\subset M)$.
    Evans and Gannon argue that for the Haagerup subfactor
such a subnet of the conformal net associated with the $E_8$ lattice \cite{EvGa2011} should exist, but so far it has not been constructed.
    \item Find a general construction for every finite index, finite depth subfactor $N\subset M$ 
which gives a conformal net $\A$ with  $\Rep(\A)\cong \cD(N\subset M)$. This would show that all 
   finite index, finite depth subfactors come from conformal nets.
\end{itemize}

\subsection*{Acknowledgements}
I would like thank Yasuyuki Kawahigashi, Roberto Longo and Zhengwei Liu for discussions.


\def\cprime{$'$}
\begin{bibdiv}
\begin{biblist}

\bib{AfMoPe2015}{article}{
      author={Afzaly, Narjess},
      author={Morrison, Scott},
      author={Penneys, David},
       title={The classification of subfactors with index at most $5\frac14$},
        date={2015},
     journal={arXiv preprint arXiv:1509.00038},
}

\bib{Bi2012}{article}{
      author={Bischoff, Marcel},
       title={{Models in Boundary Quantum Field Theory Associated with Lattices
  and Loop Group Models}},
        date={2012},
        ISSN={0010-3616},
     journal={Comm. Math. Phys.},
       pages={1–32},
      eprint={arXiv:1108.4889v1 [math-ph]},
         url={http://dx.doi.org/10.1007/s00220-012-1511-2},
        note={10.1007/s00220-012-1511-2},
}

\bib{Bi2015}{article}{
      author={Bischoff, Marcel},
       title={A remark on {CFT} realization of quantum doubles of subfactors.
  {C}ase index $<$ 4},
        date={2015},
     journal={arXiv preprint arXiv:1506.02606},
}

\bib{BiKaLo2014}{article}{
      author={Bischoff, Marcel},
      author={Kawahigashi, Yasuyuki},
      author={Longo, Roberto},
       title={Characterization of 2d rational local conformal nets and its
  boundary conditions: the maximal case},
        date={2015},
     journal={Documenta Mathematica},
      volume={20},
       pages={1137\ndash 1184},
}

\bib{BiKaLoRe2014}{article}{
      author={Bischoff, Marcel},
      author={Kawahigashi, Yasuyuki},
      author={Longo, Roberto},
      author={Rehren, Karl-Henning},
       title={{Phase boundaries in algebraic conformal QFT}},
        date={2014-05},
      eprint={arxiv:1405.7863v1 [math-ph]},
         url={http://arxiv.org/abs/1405.7863v1},
}

\bib{BiKaLoRe2014-2}{book}{
      author={Bischoff, Marcel},
      author={Kawahigashi, Yasuyuki},
      author={Longo, Roberto},
      author={Rehren, Karl-Henning},
       title={Tensor categories and endomorphisms of von neumann algebras: with
  applications to quantum field theory},
      series={SpringerBriefs in Mathematical Physics},
   publisher={Springer},
        date={2015},
      volume={3},
         url={http://arxiv.org/abs/1407.4793},
}

\bib{Co1973}{article}{
      author={Connes, Alain},
       title={{Une classification des facteurs de type {${\rm} III$}}},
        date={1973},
     journal={Ann. Sci. École Norm. Sup.(4)},
      volume={6},
       pages={133–252},
}

\bib{DoHaRo1971}{article}{
      author={Doplicher, Sergio},
      author={Haag, Rudolf},
      author={Roberts, John~E.},
       title={Local observables and particle statistics. {I}},
        date={1971},
        ISSN={0010-3616},
     journal={Comm. Math. Phys.},
      volume={23},
       pages={199\ndash 230},
      review={\MR{0297259 (45 \#6316)}},
}

\bib{DoRo1989}{article}{
      author={Doplicher, Sergio},
      author={Roberts, John~E.},
       title={A new duality theory for compact groups},
        date={1989},
        ISSN={0020-9910},
     journal={Invent. Math.},
      volume={98},
      number={1},
       pages={157\ndash 218},
         url={http://dx.doi.org/10.1007/BF01388849},
      review={\MR{1010160 (90k:22005)}},
}

\bib{DoXu2006}{article}{
      author={Dong, Chongying},
      author={Xu, Feng},
       title={{Conformal nets associated with lattices and their orbifolds}},
        date={2006},
        ISSN={0001-8708},
     journal={Adv. Math.},
      volume={206},
      number={1},
       pages={279–306},
      eprint={math/0411499v2},
         url={http://dx.doi.org/10.1016/j.aim.2005.08.009},
}

\bib{EvGa2011}{article}{
      author={Evans, David~E.},
      author={Gannon, Terry},
       title={{The exoticness and realisability of twisted {H}aagerup-{I}zumi
  modular data}},
        date={2011},
        ISSN={0010-3616},
     journal={Comm. Math. Phys.},
      volume={307},
      number={2},
       pages={463–512},
         url={http://dx.doi.org/10.1007/s00220-011-1329-3},
      review={\MR{2837122 (2012m:17040)}},
}

\bib{EvGa2014}{article}{
      author={Evans, David~E.},
      author={Gannon, Terry},
       title={Near-group fusion categories and their doubles},
        date={2014},
        ISSN={0001-8708},
     journal={Adv. Math.},
      volume={255},
       pages={586\ndash 640},
         url={http://dx.doi.org/10.1016/j.aim.2013.12.014},
      review={\MR{3167494}},
}

\bib{EtNiOs2005}{article}{
      author={Etingof, Pavel},
      author={Nikshych, Dmitri},
      author={Ostrik, Viktor},
       title={On fusion categories},
        date={2005},
        ISSN={0003-486X},
     journal={Ann. of Math. (2)},
      volume={162},
      number={2},
       pages={581\ndash 642},
         url={http://dx.doi.org/10.4007/annals.2005.162.581},
      review={\MR{2183279 (2006m:16051)}},
}

\bib{FrReSc1989}{article}{
      author={Fredenhagen, K.},
      author={Rehren, K.-H.},
      author={Schroer, B.},
       title={{Superselection sectors with braid group statistics and exchange
  algebras. {I}.\ {G}eneral theory}},
        date={1989},
        ISSN={0010-3616},
     journal={Comm. Math. Phys.},
      volume={125},
      number={2},
       pages={201–226},
         url={http://projecteuclid.org/getRecord?id=euclid.cmp/1104179464},
      review={\MR{1016869 (91c:81047)}},
}

\bib{GrSn2012}{misc}{
      author={Grossman, Pinhas},
      author={Snyder, Noah},
       title={{The Brauer-Picard group of the Asaeda-Haagerup fusion
  categories}},
        date={2015},
}

\bib{Ha1987}{article}{
      author={Haagerup, Uffe},
       title={Connes' bicentralizer problem and uniqueness of the injective
  factor of type {${\rm III}_1$}},
        date={1987},
        ISSN={0001-5962},
     journal={Acta Math.},
      volume={158},
      number={1-2},
       pages={95\ndash 148},
         url={http://dx.doi.org/10.1007/BF02392257},
      review={\MR{880070 (88f:46117)}},
}

\bib{Ha1994}{incollection}{
      author={Haagerup, Uffe},
       title={{Principal graphs of subfactors in the index range
  {$4<[M:N]<3+\sqrt2$}}},
        date={1994},
   booktitle={{Subfactors ({K}yuzeso, 1993)}},
   publisher={World Sci. Publ., River Edge, NJ},
       pages={1–38},
      review={\MR{1317352 (96d:46081)}},
}

\bib{Ha}{book}{
      author={Haag, Rudolf},
       title={{Local quantum physics}},
   publisher={Springer Berlin},
        date={1996},
}

\bib{HaYa2000}{article}{
      author={Hayashi, Tomohiro},
      author={Yamagami, Shigeru},
       title={Amenable tensor categories and their realizations as {AFD}
  bimodules},
        date={2000},
        ISSN={0022-1236},
     journal={J. Funct. Anal.},
      volume={172},
      number={1},
       pages={19\ndash 75},
         url={http://dx.doi.org/10.1006/jfan.1999.3521},
      review={\MR{1749868 (2001d:46092)}},
}

\bib{Iz2000}{article}{
      author={Izumi, Masaki},
       title={{The Structure of Sectors Associated with Longo–Rehren
  Inclusions\\I. General Theory}},
        date={2000},
        ISSN={0010-3616},
     journal={Comm. Math. Phys.},
      volume={213},
       pages={127–179},
         url={http://dx.doi.org/10.1007/s002200000234},
}

\bib{JoMoSn2014}{article}{
      author={Jones, Vaughan F.~R.},
      author={Morrison, Scott},
      author={Snyder, Noah},
       title={The classification of subfactors of index at most 5},
        date={2014},
        ISSN={0273-0979},
     journal={Bull. Amer. Math. Soc. (N.S.)},
      volume={51},
      number={2},
       pages={277\ndash 327},
         url={http://dx.doi.org/10.1090/S0273-0979-2013-01442-3},
      review={\MR{3166042}},
}

\bib{Jo2014}{article}{
      author={Jones, V. F.~R.},
       title={Some unitary representations of {T}hompson's groups {$F$} and
  {$T$}},
        date={2014},
     journal={arXiv preprint arXiv:1412.7740},
}

\bib{Jo1983}{article}{
      author={Jones, V. F.~R.},
       title={{Index for subfactors}},
        date={1983},
        ISSN={0020-9910},
     journal={Invent. Math.},
      volume={72},
      number={1},
       pages={1–25},
         url={http://dx.doi.org/10.1007/BF01389127},
      review={\MR{696688 (84d:46097)}},
}

\bib{Ka2015-02}{article}{
      author={Kawahigashi, Yasuyuki},
       title={Conformal field theory, tensor categories and operator algebras},
        date={2015},
        ISSN={1751-8113},
     journal={J. Phys. A},
      volume={48},
      number={30},
       pages={303001, 57},
         url={http://dx.doi.org/10.1088/1751-8113/48/30/303001},
      review={\MR{3367967}},
}

\bib{KaLo2004}{article}{
      author={Kawahigashi, Y.},
      author={Longo, Roberto},
       title={{Classification of local conformal nets. Case {$c < 1$}.}},
        date={2004},
        ISSN={0003-486X},
     journal={Ann. Math.},
      volume={160},
      number={2},
       pages={493–522},
}

\bib{KaLo2006}{article}{
      author={Kawahigashi, Y.},
      author={Longo, Roberto},
       title={{Local conformal nets arising from framed vertex operator
  algebras}},
        date={2006},
        ISSN={0001-8708},
     journal={Adv. Math.},
      volume={206},
      number={2},
       pages={729–751},
      eprint={math/0411499v2},
}

\bib{KaLoMg2001}{article}{
      author={Kawahigashi, Y.},
      author={Longo, Roberto},
      author={Müger, Michael},
       title={{Multi-Interval Subfactors and Modularity of Representations in
  Conformal Field Theory}},
        date={2001},
     journal={Comm. Math. Phys.},
      volume={219},
       pages={631–669},
      eprint={arXiv:math/9903104},
}

\bib{KaSu2012}{article}{
      author={Kawahigashi, Yasuyuki},
      author={Suthichitranont, Noppakhun},
       title={{Construction of holomorphic local conformal framed nets}},
        date={2014-12},
     journal={Internat. Math. Res. Notices},
      volume={2014},
       pages={2924–2943},
      eprint={1212.3771v1},
         url={http://arxiv.org/abs/1212.3771v1;
  http://arxiv.org/pdf/1212.3771v1},
}

\bib{Li2015}{article}{
      author={Liu, Zhengwei},
       title={Singly generated planar algebras of small dimension, part iv},
        date={2015},
     journal={arXiv preprint arXiv:1507.06030},
}

\bib{Lo2003}{article}{
      author={Longo, Roberto},
       title={{Conformal Subnets and Intermediate Subfactors}},
        date={2003},
        ISSN={0010-3616},
     journal={Comm. Math. Phys.},
      volume={237},
       pages={7–30},
      eprint={arXiv:math/0102196v2 [math.OA]},
         url={http://dx.doi.org/10.1007/s00220-003-0814-8},
}

\bib{Lo2012p}{misc}{
      author={Longo, Roberto},
       title={{private communications}},
        date={2012},
}

\bib{Lo1990}{article}{
      author={Longo, Roberto},
       title={{Index of subfactors and statistics of quantum fields. II.
  Correspondences, Braid Group Statistics and Jones Polynomial}},
        date={1990},
     journal={Comm. Math. Phys.},
      volume={130},
       pages={285–309},
}

\bib{LoRe2004}{article}{
      author={Longo, Roberto},
      author={Rehren, Karl-Henning},
       title={{Local Fields in Boundary Conformal QFT}},
        date={2004},
     journal={Rev. Math. Phys.},
      volume={16},
       pages={909–960},
      eprint={arXiv:math-ph/0405067},
}

\bib{LoRe2009}{article}{
      author={Longo, Roberto},
      author={Rehren, Karl-Henning},
       title={{How to Remove the Boundary in CFT - An Operator Algebraic
  Procedure}},
        date={2009-02},
     journal={Comm. Math. Phys.},
      volume={285},
       pages={1165–1182},
      eprint={arXiv:0712.2140 [math-ph]},
}

\bib{LoRo1997}{article}{
      author={Longo, R.},
      author={Roberts, J.~E.},
       title={{A theory of dimension}},
        date={1997},
        ISSN={0920-3036},
     journal={K-Theory},
      volume={11},
      number={2},
       pages={103–159},
      eprint={arXiv:funct-an/9604008v1},
         url={http://dx.doi.org/10.1023/A:1007714415067},
      review={\MR{1444286 (98i:46065)}},
}

\bib{Mg2003}{article}{
      author={Müger, Michael},
       title={{From subfactors to categories and topology. {I}. {F}robenius
  algebras in and {M}orita equivalence of tensor categories}},
        date={2003},
        ISSN={0022-4049},
     journal={J. Pure Appl. Algebra},
      volume={180},
      number={1-2},
       pages={81–157},
         url={http://dx.doi.org/10.1016/S0022-4049(02)00247-5},
      review={\MR{1966524 (2004f:18013)}},
}

\bib{Mg2003II}{article}{
      author={Müger, Michael},
       title={{From subfactors to categories and topology. {II}. {T}he quantum
  double of tensor categories and subfactors}},
        date={2003},
        ISSN={0022-4049},
     journal={J. Pure Appl. Algebra},
      volume={180},
      number={1-2},
       pages={159–219},
         url={http://dx.doi.org/10.1016/S0022-4049(02)00248-7},
      review={\MR{1966525 (2004f:18014)}},
}

\bib{NiSzWi1998}{article}{
      author={Nill, Florian},
      author={Szlachanyi, Kornel},
      author={Wiesbrock, Hans-Werner},
       title={Weak hopf algebras and reducible jones inclusions of depth 2. i:
  From crossed products to jones towers},
        date={1998},
     journal={arXiv preprint math},
        ISSN={9806130/},
}

\bib{NiVa2000}{article}{
      author={Nikshych, Dmitri},
      author={Vainerman, Leonid},
       title={A characterization of depth 2 subfactors of {${\rm II}_1$}
  factors},
        date={2000},
        ISSN={0022-1236},
     journal={J. Funct. Anal.},
      volume={171},
      number={2},
       pages={278\ndash 307},
         url={http://dx.doi.org/10.1006/jfan.1999.3522},
      review={\MR{1745634 (2000m:46129)}},
}

\bib{Oc2001}{incollection}{
      author={Ocneanu, Adrian},
       title={Operator algebras, topology and subgroups of quantum
  symmetry---construction of subgroups of quantum groups},
        date={2001},
   booktitle={Taniguchi {C}onference on {M}athematics {N}ara '98},
      series={Adv. Stud. Pure Math.},
      volume={31},
   publisher={Math. Soc. Japan, Tokyo},
       pages={235\ndash 263},
      review={\MR{1865095 (2002j:57059)}},
}

\bib{Po1993}{book}{
      author={Popa, Sorin},
       title={Classification of subfactors and their endomorphisms},
      series={CBMS Regional Conference Series in Mathematics},
   publisher={Published for the Conference Board of the Mathematical Sciences,
  Washington, DC; by the American Mathematical Society, Providence, RI},
        date={1995},
      volume={86},
        ISBN={0-8218-0321-2},
      review={\MR{1339767 (96d:46085)}},
}

\bib{Re1996}{inproceedings}{
      author={Rehren, K-H},
       title={Weak {C}${}^*$ {H}opf symmetry},
        date={1997},
   booktitle={Quantum group symposium at group21, proceedings, goslar 1996},
   publisher={Heron Press, Sofia BG},
       pages={62\ndash 69},
}

\bib{ReTu1991}{article}{
      author={Reshetikhin, N.~Yu.},
      author={Turaev, V.~G.},
       title={{Ribbon graphs and their invariants derived from quantum
  groups}},
        date={1990},
        ISSN={0010-3616},
     journal={Comm. Math. Phys.},
      volume={127},
      number={1},
       pages={1–26},
         url={http://projecteuclid.org/getRecord?id=euclid.cmp/1104180037},
      review={\MR{1036112 (91c:57016)}},
}

\bib{Sc2001}{article}{
      author={Schauenburg, Peter},
       title={The monoidal center construction and bimodules},
        date={2001},
        ISSN={0022-4049},
     journal={J. Pure Appl. Algebra},
      volume={158},
      number={2-3},
       pages={325\ndash 346},
         url={http://dx.doi.org/10.1016/S0022-4049(00)00040-2},
      review={\MR{1822847 (2002f:18013)}},
}

\bib{Tu1994}{book}{
      author={Turaev, V.G.},
       title={{Quantum Invariants of Knots and 3-Manifolds}},
   publisher={Walter de Gruyter},
        date={1994},
}

\bib{Wa}{article}{
      author={Wassermann, Antony},
       title={{Operator algebras and conformal field theory III. Fusion of
  positive energy representations of LSU(N) using bounded operators}},
        date={1998},
     journal={Invent. Math.},
      volume={133},
      number={3},
       pages={467–538},
      eprint={arXiv:math/9806031v1 [math.OA]},
}

\bib{Xu2000}{article}{
      author={Xu, Feng},
       title={{Jones-{W}assermann subfactors for disconnected intervals}},
        date={2000},
        ISSN={0219-1997},
     journal={Commun. Contemp. Math.},
      volume={2},
      number={3},
       pages={307–347},
      eprint={arXiv:q-alg/9704003},
         url={http://dx.doi.org/10.1142/S0219199700000153},
      review={\MR{1776984 (2001f:46094)}},
}

\end{biblist}
\end{bibdiv}
\end{document}